\newtheorem{theorem}{Theorem}[section]
\newtheorem{lemma}[theorem]{Lemma}
\newtheorem{definition}[theorem]{Definition}
\newtheorem{remark}[theorem]{Remark}
\def\R{\mathbb{R}}
\def\x{x}
\def\zero{\vec{0}}
\def\hatd{\hat{D}}
\def\Di{D^i}
\def\Distar{D^i_*}
\def \S {S}
\def \s {s}
\def \h {s}
\newcommand{\fullversion}[1]{}
\begin{document}

\title{Algorithms for Colourful Simplicial Depth and Medians in the Plane}
\author{Olga Zasenko}
\email{ozasenko@sfu.ca}
\author{Tamon Stephen}
\email{tamon@sfu.ca}
\address{Simon Fraser University, British Columbia, Canada}

\maketitle
\pagenumbering{arabic}

\begin{abstract}
The {\it colourful simplicial depth} (CSD) of a point $\x \in \R^2$ relative
to a configuration $P=(P^1, P^2, \ldots, P^k)$ of $n$ points in $k$ colour 
classes is exactly the number of closed simplices (triangles) with vertices from 
3 different colour classes that contain $\x$ in their convex hull. 
We consider the problems of efficiently computing the colourful simplicial
depth of a point $x$, and of finding a point in $\R^2$, called a {\it median},
that maximizes colourful simplicial depth.

For computing the colourful simplicial depth of $\x$, our algorithm runs in
time $O\left(n \log{n} + k n \right)$ in general, and $O(kn)$ if the points 
are sorted around $\x$.  For finding the colourful median, we get a time
of $O(n^4)$.  For comparison, the running times of the best known algorithm
for the monochrome version of these problems are $O\left(n \log{n} \right)$ 
in general, improving to $O(n)$ if the points are sorted around $\x$
for monochrome depth, and $O(n^4)$ for finding a monochrome median. 
\end{abstract}

\section{Introduction}
\label{S:1}
The \textit{simplicial depth} of a point $\x \in \R^2$ relative to a set $P$ of
$n$ data points is exactly the number of simplices (triangles) formed with the
points from $P$ that contain $\x$ in their convex hull. A
\textit{simplicial median} of the set $P$ is any point in $\R^2$ which is
contained in the most triangles formed by elements of $P$,
i.e.~has maximum simplicial depth with respect to $P$.
Here we consider a set $P$ that consists of $k$ colour classes
$P^1, \ldots, P^k$. The {\it colourful simplicial depth} 
of $\x$ with respect to configuration $P$
is the number of triangles with vertices from 3 different colour classes
that contain $\x$. A {\it colourful simplicial median} of a configuration
$P=(P^1, P^2, \ldots, P^k)$ is any point in the convex hull of $P$
with maximum colourful simplicial depth. 

The monochrome simplicial depth was introduced by Liu \cite{Liu90}.
Up to a constant,
it can be interpreted as the probability that $x$ is in the convex hull
of a random simplex generated by $P$.  The colourful version, 
see \cite{MR2225675}, 
generalizes this to selecting points from $k$ distributions.  
Then medians are central points which are in some sense most 
representative of the distribution(s).  
Our objective is to find efficient algorithms for finding both the
colourful simplicial depth of a given point $\x$ with respect to a
configuration, and a colourful simplicial median of a configuration.

\subsection{Background}\label{S:background}
Both monochrome and colourful simplicial depth extend to $\R^d$ and 
are natural objects of study in discrete geometry.  For more background
on simplicial depth and competing measures of data depth, see
\cite{Alo06} and \cite{FR05}.  Monochrome depth has seen a flurry of
activity in the past few years, most notably relating to the {\it First
Selection Lemma}, which is a lower bound for the depth of the median,
see e.g.~\cite{MW14}.  

The colourful setting for simplicial depth is suggested by
B{\'a}r{\'a}ny's approach \cite{MR676720} to proving a colourful
version of Carath{\'e}odory's theorem. Deza et al.~\cite{MR2225675}
formalized the notion and considered bounds for the colourful depth of
points in the intersection of the convex hulls of the colours.
Among the recent work on colourful depth are proofs of the
lower~\cite{Sar15} and upper~\cite{ABP+16} bounds conjectured by
Deza et al., with the latter result showing beautiful connections
to Minkowski sums of polytopes.

The monochrome simplicial depth can be computed by enumerating 
simplices, but in general dimension, it is quite challenging to
compute it more efficiently~\cite{Alo06}, \cite{CO01}, \cite{FR05}.
Several authors have considered the two-dimensional version of the
problem, including Khuller and Mitchell \cite{MR1045522}, Gil, Steiger 
and Wigderson \cite{MR1189827} and Rousseeuw and Ruts
\cite{rousseeuw1996algorithm}. Each of these groups produced an
algorithm that computes the monochrome depth 
in $O(n\log{n})$ time, with sorting the input as the bottleneck. 
If the input points are sorted, these algorithms take linear time.

For simplicial medians, Khuller and Mitchell~\cite{MR1045522}, and 
Gil, Steiger, and Wigderson~\cite{MR1189827} 
considered an \textit{in-sample} version of a simplicial 
median, that, they looked for a point \emph{from $P$} with maximum simplicial 
depth.  However, we consider a \textit{simplicial median} to be {\it any} 
point $\x \in \R^2$ maximizing the simplicial depth.  
Rousseeuw and Ruts~\cite{rousseeuw1996algorithm} found an algorithm to
compute the simplicial median in $O(n^5 \log{n})$ time,
Aloupis et al.~\cite{MR1989273} improved this to $O(n^4)$.
This is arguably as good as should be expected, following
the observation of Lemma~\ref{L:intersection_points} in Section~\ref{SS:ccsm:1}
that shows that there are in some sense $\Theta(n^4)$ candidate points
for the location of the colourful median.

\begin{remark}
The two groups \cite{MR1045522,MR1189827} who studied the in-sample median 
computed the simplicial depth of each point in the data set $P$ in total
$O(n^2)$ time. 
\end{remark}

\subsection{Organization and Main Results}
In Section~\ref{S:ccsd}, we develop an algorithm for computing
colourful simplicial depth that runs in $O(n \log{n} + kn)$ time. This
retains the $O(n\log{n})$ asymptotics of the monochrome algorithms when 
$k$ is fixed. As in the monochrome case, sorting the initial input is a 
bottleneck, and the time drops to $O(kn)$ if the input is sorted around $\x$.  
In this case, for fixed $k$, it is a linear time algorithm.

In Section~\ref{S:ccsm}, we turn our attention to computing a
colourful simplicial median.  We develop an algorithm that does
this in $O(n^4)$ time using a topological sweep.  This is 
independent of $k$ and matches the running time from the monotone case.
Section~\ref{S:Concl} contains
conclusions and discussion about future directions.

\section{Computing Colourful Simplicial Depth}
\label{S:ccsd}
\subsection{Preliminaries}
\label{S:ccsdp}

We consider a family of sets $P^1$, $P^2$, $\ldots$ , $P^k$
$\subseteq \R^2$, $k \geq 3$, where each $P^i$ consists of the
points of some particular colour $i$. Refer to the $j^{th}$ element of
$P^i$ as $P^i_j$. 
We generally use superscripts for colour classes, while 
subscripts indicate the position in the array.
We will sometime perform arithmetic operations on the subscripts,
in which case the indices are taken modulo the size of the array
i.e.~(mod $n_i$). 

We denote the union of all colour sets by $P$:
$P = \bigcup\limits_{i = 1}^k P^i$. The total number of points is
$n$, where $\vert P^i \vert = n_i$, $\sum\limits_{i = 1}^k n_i = n$. 
We assume that points of $P \bigcup \lbrace\x\rbrace$ are
in general position to avoid technicalities. Without loss of
generality, we can take $\x = {\zero}$, the zero vector.

\begin{definition}{} \label{def:col_tri}
A \emph{colourful triangle} is a triangle with one vertex of each
colour, i.e. it is a triangle whose vertices $v_1, v_2, v_3$ are
chosen from distinct sets $P^{i_1}$, $P^{i_2}$, $P^{i_3}$,
where $i_i \ne i_2, i_3; i_2 \ne i_3$.  
\end{definition}

\begin{definition}{} \label{def:csd}
The \emph{colourful simplicial depth} $\hatd(\x, P)$ of a point $\x$
relative to the set $P$ in $\R^2$ is the number of colourful
triangles containing $\x$ in their convex hull.  We reserve $D(\x,P)$ for
the (monochrome) simplicial depth, which counts all triangles
from $P$ regardless of the colours of their vertices.
\end{definition}

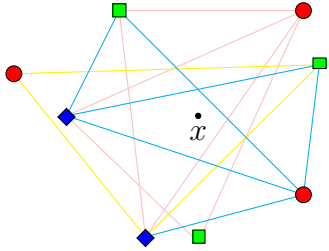
\begin{figure}[h]
\center
  \begin{tikzpicture}[scale=0.7]
    \filldraw (0,0) circle (0.05cm) node[below] {$x$};
    \coordinate(r1) at (2, 2);
    \coordinate(r3) at (2, -1.5);
    \coordinate(r2) at (-3.5, 0.8);
    \coordinate(g2) at (-1.5, 2);
    \coordinate(g1) at (2.3, 0.97);
    \coordinate(g3) at (0, -2.4);
    \coordinate(b2) at (-1, -2.3);
    \coordinate(b1) at (-2.5, 0);
    \draw[color=pink] (r1) -- (g2) -- (b2) -- cycle;
    \draw[color=pink] (r1) -- (g3) -- (b1) -- cycle;
    \draw[color=yellow] (r2) -- (g1) -- (b2) -- cycle;
    \draw[color=cyan] (r3) -- (g1) -- (b1) -- cycle;
    \draw[color=cyan] (r3) -- (g2) -- (b1);
    \draw[color=cyan] (r3) -- (b2);
    \filldraw[color=black, fill=red] (r1) circle (0.15cm);
    \filldraw[color=black, fill=red] (r2) circle (0.15cm);
    \filldraw[color=black, fill=red] (r3) circle (0.15cm) ;
    \filldraw[color=black, fill=green] (-1.37, 2.13) rectangle (-1.62, 1.88);
    \filldraw[color=black, fill=green] (2.43, 1.1) rectangle (2.18, 0.9);
    \filldraw[color=black, fill=green] (0.13, -2.17) rectangle (-0.1, -2.42);
    \filldraw[color=black, fill=blue, rotate around={45:(b2)}] (-1.13, -2.43) rectangle (-0.9, -2.2);
    \filldraw[color=black, fill=blue, rotate around={45:(b1)}] (-2.4, 0.1) rectangle (-2.63, -0.13);
\end{tikzpicture}
\caption[ ]{A configuration $P$ of 8 points in $\R^2$ surrounding a
  point $\x$ with $\hatd(\x,P)=6$.}
\label{fig:CSD_in_color}
\end{figure}

\begin{remark} We are checking containment in \emph{closed} triangles.
With our general position assumption, this will not affect the
value of $\hatd(\x,P)$.  It is more natural to consider closed triangles
than open triangles in defining colourful medians; 
the open triangles version of this question may also be interesting.
\end{remark}

Throughout the paper we work with polar angles $\theta^i_j$ formed by
the data points $P^i_j$ and a fixed ray from $\x$. 
We remark that simplicial containment does not change as points
are moved on rays from $\x$, see for example~\cite{MR1790005}.
Thus we can ignore the moduluses of the $P^i_j$, and work entirely
with the $\theta^i_j$, which lie on the unit circle $\mathcal{C}$ 
with $\x$ as its origin.  
We will at times abuse notation, and not distinguish between $P^i_j$ 
and $\theta^i_j$. 

Note that the ray taken to have angle 0 is arbitrary, and may be chosen 
based on an underlying coordinate system if available, or set to the
direction of the first data point $P_1$. 
We can sort the input by polar angle, in other words, we can order the 
points around $\x$. (Perhaps it is naturally presented this way.)
We reduce the $\theta^i_j$ to lie in the range $[0, 2\pi)$. 

The \textit{antipode} of some point $\alpha$ 
on the unit circle is $\bar{\alpha} =
(\alpha + \pi) \bmod{2\pi}$. A key fact  in computing CSD is that
a triangle $\bigtriangleup abc$ does 
\textit{not} contain $\x$ if and only if the corresponding polar
angles of points $a$, $b$ and $c$ lie on a circular arc of less than
$\pi$ radians. This is illustrated in Fig.~\ref{fig:lemma_antipodes}, and
is equivalent to the following lemma, stated by 
Gil, Steiger and Wigderson~\cite{MR1189827}:

\begin{lemma} \label{le:topological_Gil}
Given points $a$, $b$, $c$ on the unit circle $\mathcal{C}$ centred at
$\x$, let $\bar{a}$ be antipodal to $a$. Then $\bigtriangleup
abc$ contains $\x$ if and only if $\bar{a}$ is
contained in the minor arc (i.e.~of at most $\pi$ radians) with 
endpoints $b$ and $c$.
\end{lemma}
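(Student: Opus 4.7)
The plan is to reduce the containment question to an intersection question involving the diameter through $a$. Assume without loss of generality that $\x = 0$ and, after a rotation, that $a = (1,0)$, so $\bar{a} = (-1,0)$ and the diameter through $a$ is the $x$-axis. General position guarantees that neither $b$ nor $c$ lies on this diameter and that $b, c$ are not antipodal.

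First I would observe that $\x \in \bigtriangleup abc$ if and only if the line through $a$ and $\x$ meets segment $bc$ at a point $p$ with $\x$ strictly between $a$ and $p$: the forward direction follows because the ray from $a$ through any interior point of the triangle must exit through the opposite side $bc$, and the reverse because then $\x \in ap \subseteq \bigtriangleup abc$. Under our normalization, this condition becomes: the chord $bc$ meets the $x$-axis at a point $p = (p_x, 0)$ with $p_x < 0$.

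Next I would case-split on whether $b$ and $c$ lie in the same open half-plane determined by the $x$-axis. If they do, the chord $bc$ never crosses the $x$-axis, so $\x \notin \bigtriangleup abc$; simultaneously the minor arc from $b$ to $c$ lies in that same half-plane and so does not contain $\bar{a}$. If they lie on opposite sides, $bc$ crosses the $x$-axis at a unique point $p$, and here I would use the elementary fact that the centre of $\mathcal{C}$ lies on the major-arc side of every non-diameter chord (the perpendicular distance from $\x$ to the chord is $\cos(\phi/2) < 1$, where $\phi < \pi$ is the angular measure of the minor arc, and the minor arc sits on the opposite side of the chord from $\x$). Consequently, moving from $\x$ along the $x$-axis and crossing the chord at $p$ lands on the minor-arc side, so the minor arc contains $a$ when $p_x > 0$ and contains $\bar{a}$ when $p_x < 0$. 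Chaining the equivalences gives $\bar{a}$ on the minor arc iff $p_x < 0$ iff $\x \in \bigtriangleup abc$.

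The main delicate step is the geometric fact that $\x$, being the centre of $\mathcal{C}$, always lies on the major-arc side of any non-diameter chord, which pins down the whole reduction. An alternative approach would parametrize $a, b, c$ by their polar angles on $\mathcal{C}$ and verify directly, by a short case analysis, that the three arcs between cyclically consecutive points are all less than $\pi$ if and only if $\bar{a}$ lies strictly between $b$ and $c$ on the shorter arc; this gives the biconditional without appealing to the chord geometry.
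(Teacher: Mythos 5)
Your argument is correct, but note that the paper itself offers no proof of this lemma: it is stated as a known fact, attributed to Gil, Steiger and Wigderson \cite{MR1189827}, so there is nothing internal to compare against. Your chord-crossing proof is a legitimate self-contained justification. The reduction ``$\x \in \bigtriangleup abc$ iff the chord $bc$ meets the diameter through $a$ at a point $p$ strictly on the $\bar{a}$-side of $\x$'' is sound, and the key geometric fact you isolate (the centre lies on the major-arc side of every non-diameter chord) is exactly what closes the loop. Two small points are left implicit and are worth a sentence each if you write this out: (i) the crossing point $p$ lies strictly inside $\mathcal{C}$ (it is interior to a chord whose endpoints are off the diameter), which is what guarantees that $a$, respectively $\bar{a}$, really lies beyond $p$ and hence on the minor-arc side; and (ii) to get the biconditional you need that when $p_x>0$ the antipode $\bar{a}$ is \emph{not} on the minor arc --- this follows from your same setup, since the diameter meets the chord's line only at $p$, so $\bar{a}$ then stays on the centre's (major-arc) side, or alternatively from the observation that an arc of measure less than $\pi$ cannot contain a pair of antipodes. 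Your degenerate-case exclusions (neither $b$ nor $c$ on the diameter through $a$, and $b,c$ not antipodal) do match the paper's general position assumption on $P \cup \lbrace \x \rbrace$, which is the setting in which the lemma is applied. The alternative you sketch at the end --- a direct case analysis on polar angles, showing all three arcs between cyclically consecutive points are shorter than $\pi$ exactly when $\bar{a}$ falls between $b$ and $c$ on the minor arc --- is essentially the standard formulation underlying the cited reference, and either route is acceptable.
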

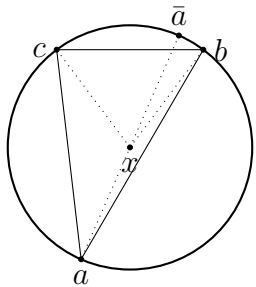
\begin{figure}[h]
\center
  \begin{tikzpicture}[scale=0.65]
    \draw[thick] (0,0) circle (2.5cm);
    \filldraw (0,0) circle (0.05cm) node[below] {$x$};
    \filldraw (1.5,2) circle (0.05cm) coordinate(b) node[right] {$b$};
    \filldraw (-1.5,2) circle (0.05cm) coordinate(c) node[left] {$c$};
    \filldraw (1,2.29) circle (0.05cm) node[above] {$\bar{a}$};
    \filldraw (-1,-2.29) circle (0.05cm) coordinate(a) node[below] {$a$};
    \draw[dotted] (1,2.29) -- (-1,-2.29);
    \draw[dotted] (0, 0) -- (1.5,2);
    \draw[dotted] (0, 0) -- (-1.5,2);
    \draw (a) -- (b) -- (c) -- cycle;
\end{tikzpicture}
\caption[ ]{Antipode $\bar{a}$ falls in the minor arc between $b$ and
$c$ and, therefore, the triangle $\bigtriangleup abc$ contains
$\x$.}
\label{fig:lemma_antipodes}
\end{figure}

\subsection{Outline of Strategy}
\label{SS:ccsds}

Recall that we denote the ordinary and colourful simplicial depth
by $D(\x,P)$ and $\hatd(\x,P)$ respectively. We can compute
$\hatd(\x,P)$ by first computing $D(x,P)$ and then removing all
triangles that contain less than three distinct colours. To this end, we
denote the number of triangles with at least two vertices of colour $i$ 
as $\Di(\x,P)$. When $\x$ and $P$ are clear from the context, we will
abbreviate these to $D$, $\hatd$ and $\Di$.

Since we can compute $D(\x,P)$ efficiently using the algorithms
mentioned in the introduction~\cite{MR1189827}, \cite{MR1045522},
\cite{rousseeuw1996algorithm}, the challenge is to compute 
$\Di(\x,P)$ for each $i=1,2,\ldots,k$. Then we conclude 
$\hatd(\x, P)= D(\x, P) - \sum\limits_{i=1}^k \Di(\x, P)$. 
To compute $\Di$ efficiently for each colour
$i$, we walk around the unit circle tracking the minor arcs
between pairs of points of colour $i$, and the number of antipodes
between them. We do this is in linear time in $n$ by
moving the front and back of the interval once around the circle, and
adjusting the number of relevant antipodes with each move. This
builds on the approach of Gil, Steiger and Wigderson~\cite{MR1189827}
for monochrome depth.

\begin{remark}
\label{R:distar} 
When computing $\Di$, we count antipodes of all $k$ colours;
the triangles with three vertices of colour $i$ will be counted three
times: $\bigtriangleup abc$, $\bigtriangleup bca$ and $\bigtriangleup cab$. 
Thus the quantity obtained by this count is in
fact $\Distar := \Di + 2 \sum\limits_{i = 1}^k D(\x, P^i)$. 
We separately compute $\sum\limits_{i = 1}^k D(\x, P^i)$, allowing us to
correct for the overcounting at the end.
\end{remark}

\subsection{Data Structures and Preprocessing}
\label{SS:ccsdm}
We begin with the arrays $\theta^i$ of polar angles, which we sort if
necessary. All elements in $\bigcup\limits_{i = 1}^k \theta^i$ are
distinct due to the general position requirement. 
By construction we have:
\begin{equation}
	0 \leq \theta^i_0 < \theta^i_1 < \ldots < \theta^i_{n_i - 1} < 2\pi, \quad\mbox{for all } 1 \leq i \leq k ~ .
\end{equation}
Let $\bar{\theta}^i$ be the array of antipodes of $\theta^i$, also
sorted in ascending order.  We generate $\bar{\theta}^i$ by 
finding the first $\theta^i_j \ge \pi$, moving the part of the array
that begins with that element to the front, and hence the front of
the original array to the back; $\pi$ is subtracted from the elements
moved to the front and added to those moved to the back.  This takes
linear time.

We merge all $\bar{\theta}^i$ into a common sorted array denoted by
$A$. Now we have all antipodes ordered as if we were
scanning them in counter-clockwise order around the circle
$\mathcal{C}$ with origin $\x$. Let us index the $n$ elements of $A$ 
starting from 0. Then, for each colour $i = 1, 
\ldots, k$, we merge $A$ and $\theta^i$ into a sorted array $A^i$.
Once again, this corresponds to a counter-clockwise ordering of 
data points 
around $\mathcal{C}$. 

While building $A^i$, we associate pointers from the elements of 
array $\theta^i$ to the corresponding position (index) in $A^i$. 
This is done by updating the pointers whenever a swap occurs during the
process of merging the arrays. Denote the index of some $\theta^i_j$
in $A^i$ by $p(\theta^i_j)$. Then the number of the antipodes that
fall in the minor arc between two consecutive points $\theta^i_j$
and $\theta^i_{j+1}$ on $\mathcal{C}$ is
$\left( p \left(\theta^i_{j+1} \right) - p\left(\theta^i_j \right) -
1 \right)$, if $p \left(\theta^i_j \right) < p \left(\theta^i_{j+1}
\right)$, or $\left(n + n_i - p \left(\theta^i_j \right) + p \left(
\theta^i_{j+1} \right) - 1\right)$, if $p\left( \theta^i_j \right) >
p \left( \theta^i_{j+1} \right)$. Note that $p\left( \theta^i_j \right)$
is never equal to $p\left( \theta^i_{j + 1} \right)$.

Now, for each point $\theta^i_j$, we find the index
$l(i, j)$ in the corresponding array $\theta^i$ such that
$\angle{\theta^i_j, x ,\theta^i_{l(i, j)}} < \pi$ and
$\angle{\theta^i_j, x, \theta^i_{l(i, j) + 1}} > \pi$
(Fig. \ref{fig:l(i,j) vs l(i,j)+1}). Thus the sequence of points
$\theta^i_{j}, \theta^i_{j+1}, \ldots, \theta^i_{l(i, j)}$ 
is maximal on an arc shorter than $\pi$. Viewing the minor arc between two
points as an interval, the intervals 
with left endpoint $\theta^i_j$ and right end point from this sequence
overlap and can be split into small disjoint intervals as follows:
\begin{equation}
\label{intervals_union}
	\left[ \theta^i_j, \theta^i_t \right)  =
    \bigcup\limits_{h = j + 1}^{t} \left[\theta^i_{h - 1}, \theta^i_h \right),
    \mbox{ where } t = j + 1, \ldots, l(i, j) ~.
\end{equation}

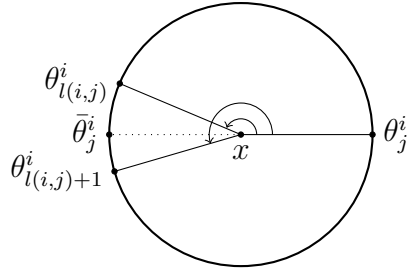
\begin{figure}[t]
\centering
	\begin{tikzpicture}[scale=0.7]
	\draw[thick] (0,0) circle (2.5cm);
    \filldraw (0,0) circle (0.05cm) coordinate(x) node[below] {$\x$};
    \filldraw (2.5, 0) circle (0.05cm) coordinate (a) node[right] {$\theta^i_j$};
    \filldraw (-2.3, 0.97) circle (0.05cm) coordinate (b) node[left] {$\theta^i_{l(i, j)}$};
    \filldraw (-2.5, 0) circle (0.05cm) coordinate (d) node[left] {$\bar{\theta}^i_j$};
    \filldraw (-2.4, -0.7) circle (0.05cm) coordinate (c) node[left] {$\theta^i_{l(i, j) + 1}$};
    \draw (2.5, 0) -- (0, 0) -- (-2.3, 0.97);
    \draw (0, 0) -- (-2.4, -0.7);
    \draw[->](x) +(0:.3cm) arc (0:154:.3cm);
    \draw[->](x) +(0:.6cm) arc (0:196:.6cm);
    \draw[dotted] (x) -- (d);
\end{tikzpicture}
\caption[ ]{Index $l(i, j)$ and index $\left(l(i, j) + 1\right)$} \label{fig:l(i,j) vs l(i,j)+1}
\end{figure}


\subsection{Computing $\Distar$}
\label{SS:3.2}
Let us the denote the count of the antipodes within the minor arc
between $a$ and $b$ by $c(a, b)$. Then $\Distar$ can be written as
follows:
\begin{equation} \label{initial_fla}
	\Distar = \sum\limits_{j = 0}^{n_i - 1} \sum\limits_{t = j + 1}^{l(i, j)} c \left(\theta^i_j, \theta^i_t \right) ~.
\end{equation}
Note that index $t$ is taken modulo $n_i$. From
(\ref{intervals_union}) we have:
\begin{equation} \label{counts_split}
	c \left( \theta^i_j, \theta^i_t \right) = \sum\limits_{h = j + 1}^t c \left( \theta^i_{h - 1}, \theta^i_h \right), \quad\mbox{for } t = j + 1,  \ldots, l(i, j) ~.
\end{equation}
Due to (\ref{initial_fla}) and (\ref{counts_split}), we have:
\begin{equation}
\label{consecutive_intervals}
	\Distar = \sum\limits_{j = 0}^{n_i - 1}\sum\limits_{t = j + 1}^{l(i, j)}\sum\limits_{h = j + 1}^t c \left( \theta^i_{h - 1}, \theta^i_h \right) ~.
\end{equation}
Let $C^i_{h} = c \left( \theta^i_{h - 1}, \theta^i_h \right)$, $\vert
C^i\vert = n_i$. Then (\ref{consecutive_intervals}) can be rewritten
as:
\begin{equation}
\label{substituted_f-la}
	\Distar = \sum\limits_{j = 0}^{n_i - 1}\sum\limits_{t = j + 1}^{l(i, j)}\sum\limits_{h = j + 1}^t C^i_{h} ~.
\end{equation}

Let us create an array of prefix sums: $S^i$, where
$S^i_t = \sum_{h = 0}^t{C^i_h}$, $\vert S^i\vert = n_i$. This array can
be filled in $O(n_i)$ time and proves to be very useful when we need to 
calculate a sum of the elements of $C^i$ between two certain indices. In 
fact, such sum can be obtained in constant time using the elements of
array $S^i$:

\begin{equation}
\label{sum_of_C}
	\sum_{h = j + 1}^t C^i_{h} =
    \begin{cases}
		S^i_t - S^i_j, & \mbox{if } t \geq j+1, j \neq n_i -1 ~, \\
        S^i_{n_i - 1} + S^i_t - S^i_j, & \mbox{if } t < j+1, j \neq n_i -1 ~, \\
        S^i_t, &\mbox{if } j=n_i-1 ~.
	\end{cases}
\end{equation}
Combining (\ref{substituted_f-la}) and (\ref{sum_of_C}), we get:
\begin{equation}
  \begin{split}
  \label{split_two}
\Distar = \sum\limits_{j = 0}^{n_i - 1}\sum\limits_{t = j + 1}^{l(i, j)} S^i_t - \sum\limits_{j = 0}^{n_i - 1} \left( (l(i, j) - j) \bmod{n_i} \right) \cdot S^i_j +
     \begin{cases}
		0, \mbox{ if } t \geq j + 1 ~ \mbox{or } ~ j = n_i -1 ~, \\
        \sum\limits_{j = 0}^{n_i - 1} \sum\limits_{t = j + 1}^{l(i, j)} S^i_{n_i - 1}, \mbox{if } t < j + 1 ~.
	 \end{cases}
  \end{split}
\end{equation}

Let us create another array of prefix sums $T^i$, where
$T^i_j = \sum_{t = 0}^j{S^i_t}$, $\vert T^i \vert = n_i$. This array is
used to retrieve the sum of elements of $S^i$ between the indices $j+1$ 
and $l(i, j)$ in $O(1)$ time:

\begin{equation}
	\sum_{t = j + 1}^{l(i, j)} S^i_t =
	\begin{cases}
		T^i_{l(i, j)} - T^i_j, & \mbox{if } l(i, j) \geq j + 1, j \neq n_i -1 ~, \\
        T^i_{n_i - 1} + T^i_{l(i, j)} - T^i_j, & \mbox{if } l(i, j) < j + 1, j \neq n_i -1 ~, \\
        T^i_{l(i, j)}, & \mbox{if } j = n_i-1.
	\end{cases}
\end{equation}
Also note that the index $t$ runs from $j+1$ to $l(i, j)$. So
$t < j + 1$ in (\ref{split_two}) is only possible if initially
$j + 1 > l(i, j)$ and we wrapped around the array. In other words,
$t < j + 1$ is equivalent to $j + 1 > l(i, j)$ and
$t = 0, \ldots, l(i, j)$. 
Hence: 

\begin{equation}
  \begin{split}
  \label{restated}  
	\Distar &= \sum\limits_{j = 0}^{n_i - 1} \left( T^i_{l(i, j)} - T^i_j
    -\left( (l(i, j) - j) \bmod{n_i} \right) \cdot S^i_j \right) \\
    &+ \begin{cases}
		n_i \cdot T^i_{n_i - 1} + \sum\limits_{j = 0}^{n_i - 1} \sum\limits_{t = 0}^{l(i, j)} S^i_{n_i - 1}, &\mbox{if } l(i, j) < j + 1 ~, \\
        0, &\mbox{otherwise} ~.
     \end{cases}
   \end{split}
\end{equation}
 
After simplifying, we obtain:
\begin{equation}
  \begin{split}
  \label{\Distar}  
     \Distar &= \sum\limits_{j = 0}^{n_i - 1} \left( T^i_{l(i, j)} - T^i_j
    -\left( (l(i, j) - j) \bmod{n_i} \right) \cdot S^i_j \right) \\
    &+ \begin{cases}
	n_i \cdot \left( T^i_{n_i - 1} + \left( \left( l(i, j) + 1 \right) \bmod{n_i} \right) \cdot S^i_{n_i - 1} \right), &\mbox{if } l(i, j) < j + 1 ~, \\
        0, &\mbox{otherwise}  ~.
     \end{cases} 
   \end{split}
\end{equation}

\subsection{Algorithm and Analysis}
\label{SS:3.3}

\algrenewcommand\Require{\textbf{Input: }}
\algrenewcommand\Ensure{\textbf{Output: }}

\begin{algorithm}
\caption[]{$\tt{CSD(\x, P)}$ \label{CSD}}
\tt{\Require{$\tt{\x, P = (P^1, \ldots, P^k)}$.}}
\Ensure{$\tt{\hatd(\x, P)}$.}
\begin{algorithmic}[1]
\State $\tt{Sum1 \gets 0, Sum2 \gets 0;}$
  \For{$\tt{i\gets 1, k}$}
  \For{$\tt{j\gets 0, n_i - 1}$}
    \State $\tt{\theta^i_j \gets}$ polar angle of $\tt{(P^i_j - \x) \bmod{ 2\pi};}$
    \State $\tt{\bar{\theta}^i_j \gets (\theta^i_j + \pi) \bmod{2\pi};}$
    \EndFor
    \State $\tt{Sort(\theta^i);}$
    \Comment{while permuting $\tt{\bar{\theta}^i}$}
    \State Restore the order in $\tt{\bar{\theta}^i;}$
    \State $\tt{Sum1 \gets Sum1 + D(\x, \theta^i);}$
    \Comment{use the algorithm from \cite{rousseeuw1996algorithm}}
  \EndFor
  \State $\tt{A \gets Merge(\bar{\theta}^1, \ldots, \bar{\theta}^k);}$
  \Comment{$\tt{A}$ is sorted}
  \State $\tt{D \gets D(\x, A);}$
  \Comment{use the algorithm from \cite{rousseeuw1996algorithm}}
  \For{$\tt{i\gets 1, k}$}
    \State $\tt{B \gets Merge (A, \theta^i);}$
    \Comment{update $\tt{p(\theta^i_j)}$ the pointers of $\tt{\theta^i_j}$,}\\
    \Comment{$\tt{B}$ stands for $\tt{A^i}$}
    \For{$\tt{j\gets 1, n_i}$}
        \Comment{$\tt{j = j \bmod{n_i}}$}
    	\If{$\tt{p(\theta^i_{j - 1}) < p(\theta^i_j)}$}
    	\State $\tt{C_j \gets p(\theta^i_j) - p(\theta^i_{j - 1}) - 1;}$
        \Comment{$\tt{C = C^i}$ - array of antipodal counts}
        \Else
        \State $\tt{C_j \gets n + n_i - p(\theta^i_{j - 1}) + p(\theta^i_j) - 1;}$
        \EndIf
      \EndFor
      \State Find $\tt{l(i, 0)}$ using binary search in $\tt{\theta^i;}$
      \State $\tt{S_0 \gets C_0}$; $\tt{T_0 \gets S_0;}$
      \Comment{$\tt{S = S^i, T = T^i}$ - prefix sum arrays}
      \For{$\tt{j\gets 1, n_i - 1}$}
      	\State Find $\tt{l(i, j);}$
        \State $\tt{S_j \gets S_{j - 1} + C_j;}$
        \State $\tt{T_j \gets T_{j - 1} + S_j;}$
      \EndFor
      \State $\tt{Sum2 \gets Sum2 + \Distar(\x, P)}$ obtained from the formula (\ref{\Distar});
      \State \textbf{delete} $\tt{B, C, S, T;}$
  \EndFor
\State \textbf{return} $\tt{\hatd(\x, P) = D - \left(Sum2 - 2 * Sum1\right);}$
\Comment{$\tt{Sum1} = \sum\limits_{i=1}^k{D(\x, P^i)}$}
\end{algorithmic}
\end{algorithm}

First, we find all polar angles and their antipodes, which takes 
$O(n)$ in total. Second, we sort the arrays of polar angles
$\theta^i$ and their corresponding antipodal
elements $\bar{\theta}^i$, which gives us $O\left(\sum\limits_{i=1}^k
n_i\log{n_i} \right)$. Third, we need to rotate $\bar{\theta}^i$, so
that they are in ascending order. This will take $O(n)$ time. Then we
compute for each $i$ the number of triangles with all three vertices 
of colour $i$ that contain $\x$, i.e.~$D(\x, P^i)$, 
using the algorithm of Rousseeuw and Ruts \cite{rousseeuw1996algorithm} 
for sorted data.
This will 
run in $O(n_i)$, for each $i$, or $O(n)$ in total. Hence lines 2-10
of the Algorithm \ref{CSD} take $O\left(\sum\limits_{i=1}^k n_i +
\sum\limits_{i=1}^k n_i\log{n_i}\right) = O(n\log{n})$ time to
complete. This follows from the facts that $\sum\limits_{i = 1}^k n_i
= n$ and $n \log{n}$ is convex.

To generate the sorted array $A$ of antipodes, we merge the $k$ single-coloured
arrays using a heap 
(following e.g.~\cite{DBLP:books/mg/CormenLR89}) in $O(n\log{k})$
time. We need to compute the monochrome depth $D(\x,P)$ of $\x$ with 
respect to all points in $P$, regardless of colour.  For this we can use 
the sorted array of antipodes 
rather than sorting the original array. Thus we again use the linear
time monochrome algorithm \cite{rousseeuw1996algorithm} with $\x$ and $A$. 
Note that working with the antipodes is equivalent due to the fact that
the  simplicial depth of $\x$ does not change if we rotate the system of
data points around the centre $\x$. 

After that, we execute a cycle of $k$ iterations -- one for each
colour. It starts with merging two sorted arrays $A$ and $\theta^i$,
which is linear in the size of arrays we are merging and takes
$O\left(\sum\limits_{i=1}^k \left(n + n_i\right)\right) =
O\left(kn\right)$ in total. Filling the arrays $C$ is linear. Since
the $l(i, j)$ appear in sequence in the array $\theta^i$, we find the
first one $l(i, 0)$ using a binary search that takes $O(\log{n_i})$, 
and $O(k\log{n})$ in total. Then we find the rest of $l(i, j)$ in
$O(n)$ time for each $i$ by scanning through the array starting from the
element $\theta^i_{l(i, 0)}$. The remaining the operations take constant
time to execute. Therefore, total running time of Algorithm \ref{CSD}
is $O\left(n + n\log{n} + n + n\log{k} + kn + k\log{n} + kn\right) 
 = O\left(n\log{n} + kn\right)$.  The $n\log{n}$ term corresponds to
the initial sorting of the data points, if they are presented in sorted
order, the running time drops to $O(kn)$.

As for space, arrays $\theta^i$, $\bar{\theta}^i$ and
$A$ take $O(3n) = O(n)$ space in total. Note that merging $k$ sorted 
arrays into $A$ can be done in place \cite{MR2650342}. At each 
iteration $i$, we create $B$ of size $O(n + n_i)$, and $C$, $S$, $T$
of size $O(n_i)$ each. Fortunately, we only need these arrays within
the $i^{th}$ iteration, so we can delete them in the end (line 31 of
the Algorithm \ref{CSD}) and reuse the space freed. To store the
indices $l(i, j)$, we need $O(n)$ space, which again can be reallocated
when $i$ changes. Thus the amount of
space used by our algorithm is $O(n)$. 

An implementation of this algorithm is available on-line \cite{CODE}.

\begin{remark}
\label{R:data_point_depth}
In Section~\ref{S:ccsm}, we will want to compute the colourful simplicial
depth of the data points themselves.  This can be done by computing 
$\hat{D}(\x, P \setminus \{x\})$ and counting colourful simplices
which have $\x$ as a vertex.  This is the number of pairs of vertices
of some other colour, so for $\x$ of colour $i'$, 
it is $\sum\limits_{\substack{i = 1\\i \neq i'}}
^k n_i\cdot\sum\limits_{\substack{j = i + 1\\j \neq i'}}^k n_j$.
A naive evaluation of this expression takes $O(k^2)$ time. Instead we 
use prefix sum arrays to compute it in linear time. Let $K_i = \sum\limits
_{\substack{j = 1\\j \neq i'}}^i n_j$. Then $\sum\limits_{\substack{j =
i + 1\\j \neq i'}}^k n_j = K_k - K_i$, and we compute 
$\sum\limits_{\substack{i = 1\\i \neq i'}}^k n_i \cdot \left(
K_k - K_i\right)$ to obtain the result. 
Array $K$ takes $O(k)$ space and takes linear time to fill out. 
\end{remark}

\section{Computing Colourful Simplicial Medians}
\label{S:ccsm}
\subsection{Preliminaries}
\label{SS:ccsm:1}

Consider a family of sets $P^1$, $P^2$, $\ldots$ , $P^k$ $\in \R^2$,
$k \geq 3$, where each $P^i$ consists of the points of some
particular colour $i$. Define $n_i = \vert P^i\vert$, for $i = 1,
\ldots, k$. Let $P$ be the union of all colour sets: $P =
\bigcup\limits_{i = 1}^k P^i$. Recall that we denote the CSD of a point 
$\x \in \R^2$ relative to $P$ by $\hat{D}(\x, P)$.

Our objective is to find a point $x$ inside the convex hull
of $P$, denoted $conv(P)$, maximizing $\hat{D}(x, P)$. Call the depth
of such a point $\hat{\mu}(P)$. 
Let $\S$ be the set of line segments formed by all possible pairs of
points $(A, B)$, where $A \in P^i$, $B \in P^j$, $i < j$. 
We will refer to these as \emph{colourful segments}. 

\begin{figure}[h]
\center
\begin{tikzpicture}[scale=.20]
    \coordinate(r1) at (20, 32);
    \coordinate(r2) at (24, 4);
    \coordinate(r3) at (24, 12);
    \coordinate(g1) at (4, 24);
    \coordinate(g2) at (16, 20);
    \coordinate(b1) at (32, 24);
    \coordinate(b2) at (8, 8);
    \draw[color=pink] (g1) -- (b1);
    \draw[color=pink] (g1) -- (r1) -- (b1);
    \draw[color=pink] (g1) -- (r2) -- (b1);
    \draw[color=pink] (g1) -- (r3) -- (b1);
    \draw[color=yellow] (g2) -- (b1);
    \draw[color=yellow] (g2) -- (r1) -- (b1);
    \draw[color=yellow] (g2) -- (r2) -- (b1);
    \draw[color=yellow] (g2) -- (r3) -- (b1);
    \draw[color=cyan] (g2) -- (b2);
    \draw[color=cyan] (b2) -- (r1) -- (g2);
    \draw[color=cyan] (b2) -- (r2) -- (g2);
    \draw[color=cyan] (b2) -- (r3) -- (g2);
    \draw[color=olive] (g1) -- (b2);
    \draw[color=olive] (g1) -- (r1) -- (b2);
    \draw[color=olive] (g1) -- (r2) -- (b2);
    \draw[color=olive] (g1) -- (r3) -- (b2);
    \filldraw[color=black, fill=red] (r1) circle (0.5cm) node[above] {$R_1$};
    \filldraw[color=black, fill=red] (r2) circle (0.5cm) node[right] {$R_2$};
    \filldraw[color=black, fill=red] (r3) circle (0.5cm) node[right] {$R_3$};
    \filldraw[color=black, fill=green] (3.5, 23.5) rectangle (4.5, 24.5)  node[left] {$G_1~$};
    \filldraw[color=black, fill=green] (15.5, 19.5) rectangle (16.5, 20.5)  node[left] {$G_2~$};
    \filldraw[color=black, fill=blue, rotate around={45:(b1)}] (31.5, 23.5) rectangle (32.5, 24.5)  node[right] {$B_1$};
    \filldraw[color=black, fill=blue, rotate around={45:(b2)}] (7.5, 7.5) rectangle (8.5, 8.5)  node[left] {$B_2~$};
    \filldraw (16, 24) circle (0.25cm) node[above] {$a~$};
    \filldraw (17.3, 24) circle (0.25cm) node[above] {$\quad ~b$};
    \filldraw (13.2, 18.4) circle (0.25cm) node[left] {$c$};
    \filldraw (12, 16) circle (0.25cm) node[left] {$e$};
    \filldraw (14.4, 17.7) circle (0.25cm)  node[right] {$\,d$};
    \filldraw (12.8, 15.2) circle (0.25cm) node[right] {$f$};
    \filldraw (18.2, 15.4) circle (0.25cm) node[below] {$g~$};
    \filldraw (17.6, 10.4) circle (0.25cm) node[below] {$h$};
    \filldraw (20.45, 11.1) circle (0.25cm) node[below] {$i~$};
\end{tikzpicture}
\caption[ ]{A configuration $P$ of 7 points in $\R^2$, whose simplicial
median has depth 6 and occurs at points $B_1, G_1, B_2, G_2, d, f, g$.}
\label{fig:median_in_color}
\end{figure}
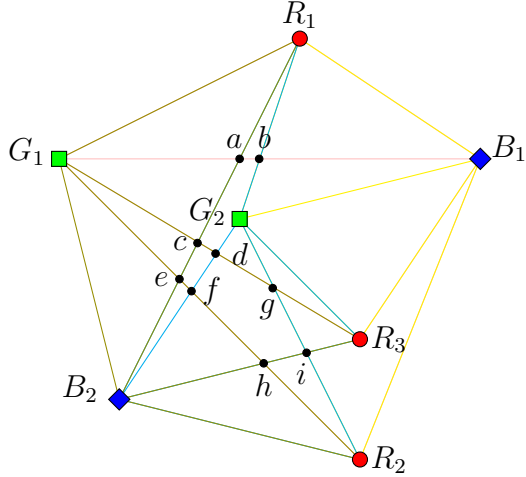

%
The following lemma (from~\cite{MR1989273}) is here adapted to a colourful setting:
\begin{lemma} \label{L:intersection_points}
To find a point with maximum colourful simplicial depth it suffices to
consider the intersection points of the colourful segments in $\S$.
\end{lemma}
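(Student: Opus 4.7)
The plan is to view $\hat{D}(\cdot, P)$ as a piecewise-constant integer-valued function on $\R^2$ whose pieces are the open cells of the planar arrangement $\mathcal{A}$ induced by the colourful segments in $\S$, and then to show that moving from the interior of a cell onto a lower-dimensional face of $\mathcal{A}$ can only weakly increase $\hat{D}$. Since the vertices of $\mathcal{A}$ are precisely the intersection points of segments in $\S$ (including the common endpoints of segments meeting at a data point), this forces the maximum of $\hat{D}$ to be attained at some such vertex.

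First I would fix a single colourful triangle $T = \triangle v_1 v_2 v_3$ and note that the indicator $\mathbf{1}[x \in T]$ can change value only as $x$ crosses one of the three sides of $T$, each of which is a colourful segment in $\S$. Summing over all colourful triangles, $\hat{D}(\cdot, P)$ is constant on each open cell of $\mathcal{A}$. Next, for $x$ in the relative interior of an edge $e \subset \overline{AB} \in \S$ of $\mathcal{A}$, the closed-triangle convention gives $x \in \triangle ABC$ for every third-colour vertex $C$, whereas in either of the two cells adjacent to $e$ the triangle $\triangle ABC$ contains $x$ only when $x$ lies on the $C$-side of line $AB$. Triangles not having $\overline{AB}$ as an edge have unchanged containment as we cross $\overline{AB}$, so the transition from an adjacent cell onto $e$ weakly increases $\hat{D}$.

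Finally, at a vertex $x^*$ of $\mathcal{A}$---either a transverse intersection of two distinct segments in general position, or the common endpoint of all segments incident to a single data point---applying the same closed-triangle argument simultaneously to every segment passing through $x^*$ yields $\hat{D}(x^*, P) \geq \hat{D}(y, P)$ for any $y$ in an adjacent edge or cell. Chaining the three steps, any maximizer of $\hat{D}$ can be deformed to a vertex of $\mathcal{A}$ without loss of depth, and the lemma follows. The main subtlety will be the monotonicity bookkeeping near the boundary: one has to verify that no containment is lost in the limit as $x$ moves onto an edge or vertex, which is exactly where the closed-triangle convention combines with the general position assumption (so that at most two segments cross at any non-data vertex) to keep the argument clean.
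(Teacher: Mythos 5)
Your argument is correct and essentially the paper's own proof: the colourful simplicial depth is constant on the cells of the arrangement induced by the colourful segments, and passing from a cell interior to an incident edge and then to a vertex can only weakly increase the depth because, with closed triangles, containment is never lost in the limit onto a lower-dimensional face. The only quibble is your parenthetical claim that general position of the data points forces at most two segments through a non-data vertex --- general position of $P$ (no three data points collinear) does not exclude three or more colourful segments being concurrent at an interior point --- but your monotonicity argument does not actually rely on this, so the proof stands.
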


\begin{proof}
The segments of $\S$ partition $conv(P)$ into cells\footnote{Unlike the
monochrome case, here some cells may not be convex, and some 
points of $conv(P)$ may fall outside any cell.}
of dimension
2, 1, 0 of constant colourful simplicial depth~\cite{MR2225675}.
Consider a 2-dimensional cell. Let
$p$ be a point in the interior of this cell, $q$ a point on the interior of
an edge and $v$ a vertex, so that $q$ and $v$
belong to the same line segment (Fig. \ref{fig:cell}). Then the
following inequality holds: $\hat{D} (p, P)\leq \hat{D}(q,
P)\leq \hat{D}(v, P)$, since any colourful simplices containing
$p$ also contain $q$, and any containing $q$ also contain $v$.
\end{proof}

\begin{figure}[ht]
\center
\begin{tikzpicture}[scale=0.5]
    \draw[-] (1, 1) -- (4, 7);
    \draw[-] (1, 5.5) -- (7, 4);
    \draw[-] (5, 6) -- (6, 3);
    \draw[-] (3, 1) -- (7, 5);
    \draw[-] (1, 4) -- (4, 1);
    \filldraw (4, 3) circle (0.05cm) coordinate (p) node[above] {$p$};
    \filldraw (2.5, 4) circle (0.05cm) coordinate (q) node[left] {$q$};
    \filldraw (3, 5) circle (0.05cm) coordinate (v) node[above] {$v$};
\end{tikzpicture}
\caption{An example of a cell} \label{fig:cell}
\end{figure}
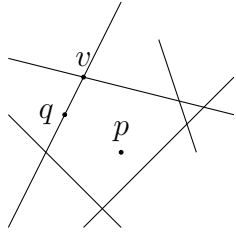


Let $col(A)$ denote the colour of a point $A$.
We store the segments in $\S$ as pairs of points: $\s = (A,
B)$, $col(A) < col(B)$. It is helpful to view each segment as 
directed, i.e.~a vector, with $A$ as
the tail and $B$ as the head. Each segment $\s$ extends to a
directed line $h$ dividing $\R^2$ into two open half-spaces:
$\s^+$ and $\s^-$, where $\s^+$ lies to the
right of the vector $\s$, and $\s^-$ to the left (Fig.
\ref{fig: \s^+ and \s^-}). We denote the set lines generated by
segments 
by $H$, so that every segment $\s \in \S$ corresponds to a line
$h \in H$.  

\begin{figure}[ht]
\centering
\subfigure[]{ 
\begin{tikzpicture}[scale=0.65]
	\filldraw (4, -1) circle (0.05cm) coordinate (c) node[below] 	{$A$};
    \filldraw (1, 2) circle (0.05cm) coordinate (d) node[left] {$B$};
	\draw[-{>}] (c) -- (d);
    \draw[draw=none] (3, 1.5) circle (0.05cm) coordinate (e) node[below] {$\s^+$};
    \draw[draw=none] (2, 0.5) circle (0.05cm) node[below] {$\s^-$};
\end{tikzpicture}
\label{fig:a}
} \hspace{2cm}
\subfigure[]{ 
\begin{tikzpicture}[scale=0.5]
	\filldraw (7, -2) circle (0.05cm) coordinate (a) node[below] {$A$};
    \filldraw (10, 2) circle (0.05cm) coordinate (b) node[right] {$B$};
	\draw[-{>}] (a) -- (b);
    \draw[draw=none] (9.5, 0) circle (0.05cm) node[below] {$\s^+$};
    \draw[draw=none] (7.5, .5) circle (0.05cm) node[below] {$\s^-$};
\end{tikzpicture}
\label{fig:b}
}
\caption{$\s^+$ and $\s^-$ of the segment $\s = (A, B)$} \label{fig: \s^+ and \s^-}
\end{figure}
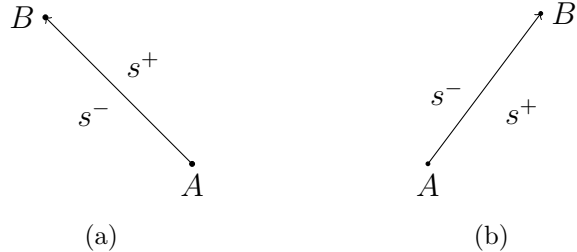 

We call the intersection points of the segments in $\S$
\textit{vertices}. Note that drawing the colourful segments
is equivalent to generating a rectilinear drawing of the complete 
graph $K_n$ with a few edges removed (the monochrome ones).
Thus, unless the points are concentrated in a single colour class,
the Crossing Lemma (see e.g.~\cite{PRTT06}) shows that we
will have $\Theta(n^4)$ vertices.  More precisely, we have a
rectilinear drawing of a complete $k$-partite graph; bounds
for this are considered some graphs from this family by
Gethner et al.~\cite{GHL+16} and references therein.

Computing the CSD of each of these points gives an 
$O(n^4 \log{n})$ algorithm for finding a simplicial median.
To improve this, we follow Aloupis et al.~\cite{MR1989273},
and compute the monochrome simplicial depth of most
vertices based on values of their neighbours and information
about the half-spaces of local segments.

Denote the number of points in $\s^+$ that have colours
different from the endpoints of $\s$ by $r(\s)$, and those in
$\s^-$ by $l(\s)$. Let $r^i(\s)$ and $l^i(\s)$ be the number of
points of a colour $i$ in $\s^+$ and $\s^-$ respectively. Let
$\bar{r}^i(\s)$ and $\bar{l}^i(\s)$ be the number of points of
all $k$ colours except for the colour $i$ in $\s^+$ and $\s^-$
respectively.  So for segment $\s = (A, B)$, we have
quantities as follows $\bar{r}^{col(A)}(\s) = \sum\limits_{
\substack{i = 1, \\ i \neq col(A)}}^k r^i(\s)$, $\bar{l}^{col(A)}(\s)
= \sum\limits_{\substack{i = 1, \\ i \neq col(A)}}^k l^i(\s)$.
Then it follows: $r(\s) = \bar{r}^{col(A)}(\s) - r^{col(B)}(\s)$ and
$l(\s) = \bar{l}^{col(A)}(\s) - l^{col(B)}(\s)$. The quantities
$\bar{r}^{col(A)}(\s)$ and $\bar{l}^{col(A)}(\s)$, $r^{col(B)}(\s)$
and $l^{col(B)}(\s)$, can be obtained as byproducts of an
algorithm that computes half-space depth.

The \textit{half-space depth} $HSD(\x, P)$ 
of a point $\x$ relative to data
set $P$ is the smallest number of data points in a half-plane
through the point $\x$ \cite{MR0426989}. 
An algorithm to compute half-space depth is
described by Rousseeuw and Ruts~\cite{rousseeuw1996algorithm},
it runs in $O(|P|)$ time when $P$ is sorted around $\x$.
It calculates the number of points $k_i$ in $P$ that lie strictly 
to the left of each line formed
by $\x$ and some point $P_i$, where $\x$ is the tail of the
vector $\overrightarrow{x P_i}$. 
Then the number of points to the right 
$\overrightarrow{x P_i}$ is $|P| - k_i - 1$.
Algorithm~\ref{alg:counts} uses a version of the half-space depth
algorithm to produce the quantities $r(\s)$ and $l(\s)$ that 
are used by our algorithm.

\algrenewcommand\Require{\textbf{\tt{Input: }}}
\algrenewcommand\Ensure{\textbf{\tt{Output: }}}
\algnewcommand\Or{\textbf{ or }}
\makeatletter\renewcommand{\ALG@name}{Algorithm}

\begin{algorithm}
\caption{\tt{Preprocessing: Computing $\tt{r(\s), l(\s)}$}}
\Require{$\tt{P, P^1, \ldots, P^k.}$}
\Ensure{$\tt{r(\s), l(\s)}$ \tt{for all} $\tt{\s \in \S}$.}
\begin{algorithmic}[1]
	\State \tt{Construct $\tt{List(P_i)}$ lists of points sorted
    around $\tt{P_i}$ for each $\tt{i = 0, \ldots, n - 1}$ \cite{MR812156};}
    \State $\tt{S \gets \emptyset}$, $\tt{H \gets \emptyset;}$
    \For{$\tt{i \gets 0, n-1}$}
    	\State $\tt{t \gets pop(List(P_i));}$
        \State $\tt{\bar{\theta}^{col(P_i)} =}$ \tt{polar angles of}
        $\tt{List(P_i)}$ \tt{with NO points of colour} $\tt{col(P_i);}$
        \State $\tt{\theta^{col(P_t)} =}$ \tt{polar angles of}
        $\tt{List(P_i)}$ \tt{of colour} $\tt{col(P_t)}$
        \tt{only;}
        \State \tt{Compute $\tt{\bar{r}^{col(P_i)}(\s),
        \bar{l}^{col(P_i)}(\s)}$ while running} $\tt{HSD(P_i, \bar{\theta}^{col(P_i)})}$        \cite{rousseeuw1996algorithm};
        \For{$\tt{i' \gets 1, k}$}
        	\If{$\tt{i' > col(P_i)}$}
            	\State \tt{Compute $\tt{r^{i'}(\s), l^{i'}(\s)}$ during the execution of} $\tt{HSD(P_i, \theta^{i'})}$
                \cite{rousseeuw1996algorithm};
                \For{$\tt{j \gets 0, n_{i'} - 1}$}
                	\State $\tt{\s \gets (P_i, P^{i'}_j);}$
                    \Comment{\tt{create a new segment}}
                    \State $\tt{ver(\s) \gets \emptyset;}$
                    \State $\tt{cross(ver(\s)) \gets \emptyset;}$
                    \State $\tt{h \gets (slope(\s), intercept(\s));}$
                    \State $\tt{push(\S, \s);}$
                    \State $\tt{push(H, h);}$
                    \State $\tt{r(\s) \gets \bar{r}^{col(P_i)}(\s) -
                    r^{i'}(\s);}$
                    \State $\tt{l(\s) \gets \bar{l}^{col(P_i)}(\s) -
                    l^{i'}(\s);}$
                \EndFor
                \State \tt{delete $\tt{\theta^{i'}}$};
            \EndIf
        \EndFor
        \State \tt{delete $\tt{\bar{\theta}^{col(P_i)}}$};
    \EndFor
\end{algorithmic}
\label{alg:counts}
\end{algorithm}

The algorithm of \cite{MR812156} will, for each $P_i \in P$, sort 
$P \setminus \lbrace P_i \rbrace$ around $P_i$ in $\Theta(|P|^2)$ time.
In particular, it assigns every point $P_i \in P$ a list of indices
that determine the order of points $P \setminus \lbrace P_i \rbrace$
in the clockwise ordering around $P_i$. Denote this by $List(P_i)$.
These ideas allow us to compute $r(\s)$ and $l(\s)$ for every
segment $\s$. At every iteration $i$, we form arrays of sorted polar
angles $\bar{\theta}^{col(P_i)}$ and $\theta^{i'}$. 
Together they take $O(2n) = O(n)$ space.

\subsection{Computing a Median}
\label{SS:ccsm:ts}
To compute the CSD of all vertices, we carry out a topological
sweep (see e.g.~\cite{MR990055}). 
We begin by extending the segments in $\S$ to a set of lines $H$.
The set $V^*$ of intersection points of lines of $H$ includes the
$\Theta(n^4)$ vertices $V$ which are on the interior of a pair of 
segments of $\S$, points from $P$, and additional exterior intersections.
We call points in $V^* \setminus V$ \textit{phantom vertices}.

Call a line segment
of any line in $H$ between two neighbouring vertices, or a ray
from a vertex on a line that contains no further vertices an \textit{edge}.  
A \textit{topological line} is a curve in $\R^2$ that is topologically 
a line and intersects each line in $H$ exactly once.  
We choose an initial topological line to be an unbounded curve 
that divides $\R^2$ into two pieces such that all the finitely many
vertices in $V$ lie on one side of the curve, by convention 
the right side.  We call this line the \textit{leftmost cut}.
We call a \textit{vertical cut} the list $(c_1, c_2,
\ldots, c_m)$ of the $m=|H|$ edges intersecting a particular topological line.
For each $i$, $1 \leq i \leq m - 1$, $c_i$ and
$c_{i + 1}$ share a 2-cell in the complex induced by $H$. 
Two vertical cuts are illustrated in Fig.~\ref{fig:leftmost_cut}. 

The topological sweep begins with the leftmost cut and moves across
the arrangement to the right, crossing one vertex at a time. If
two edges $c_i$ and $c_{i + 1}$ of the current cut have a common
right endpoint, we store the index $i$ in the stack $I$. For example,
in Figure \ref{fig:leftmost_cut}, $I = \lbrace 1, 4\rbrace$. An
\textit{elementary step} is performed when we move to a new vertex 
by popping the stack $I$. In Figure \ref{fig:top_sweep},
we have moved past the vertex $v$, a common right endpoint of $c_4$
and $c_5$ which is the intersection point of
$h_1$ and $h_2$. The updated stack is $I = \lbrace 1, 3\rbrace$.

\begin{figure}[ht]
\subfigure[The leftmost cut]{ 
\begin{tikzpicture}[scale=.35]
	\draw[draw=none] (1, 1) circle (0.05cm) coordinate (a1) node[left] {$h_1$};
    \draw[draw=none] (1, 2) circle (0.05cm) coordinate (a2) node[left] {$h_2$};
    \draw[draw=none] (1, 5) circle (0.05cm) coordinate (a3) node[left] {$h_3$};
    \draw[draw=none] (1, 8) circle (0.05cm) coordinate (a4) node[left] {$h_4$};
    \draw[draw=none] (1, 9) circle (0.05cm) coordinate (a5) node[left] {$h_5$};
    \draw[-] (1, 1) -- (16, 8);
    \draw[-] (1, 2) -- (16, 3);
    \draw[-] (1, 5) -- (16, 1);
    \draw[-] (1, 8) -- (16, 4);
    \draw[-] (1, 9) -- (16, 2);
    \filldraw (3.5, 2.16) circle (0.1cm) coordinate (c1) node[above] {};
    \filldraw (10, 2.6) circle (0.1cm) coordinate (c2) node[left] {};
    \filldraw (6.45, 3.54) circle (0.1cm) coordinate (c3) node[above] {};
    \filldraw (10.54, 5.46) circle (0.1cm) coordinate (c4) node[above] {};
    \filldraw (6, 6.66) circle (0.1cm) coordinate (c5) node[above] {};
    \filldraw (9.57, 5) circle (0.1cm) coordinate (c6) node[above] {};
    \filldraw (14.12, 2.87) circle (0.1cm) coordinate (c7) node[above] {};
    \draw[dashed] (2, 0.5) .. controls (2, 4) and (6, 7) .. (3.5, 9.5);
    \draw[line width=1] (a1) -- (c1);
    \draw[line width=1] (a2) -- (c1);
    \draw[line width=1] (a3) -- (c3);
    \draw[line width=1] (a4) -- (c5);
    \draw[line width=1] (a5) -- (c5);
    \draw[draw=none] (3, 8) circle (0.05cm) coordinate (a1) node[above] {$c_1$};
    \draw[draw=none] (2.5, 7.5) circle (0.05cm) coordinate (a1) node[below] {$c_2$};
    \draw[draw=none] (5, 4) circle (0.05cm) coordinate (a1) node[above] {$c_3$};
    \draw[draw=none] (1.5, 2) circle (0.05cm) coordinate (a1) node[above] {$c_4$};
    \draw[draw=none] (3, 1.5) circle (0.05cm) coordinate (a1) node[below] {$c_5$};
\end{tikzpicture}
\label{fig:leftmost_cut}
}
\subfigure[An elementary step in a topological sweep]{ 
\begin{tikzpicture}[scale=0.35]
	\draw[draw=none] (1, 1) circle (0.05cm) coordinate (a1) node[left] {$h_1$};
    \draw[draw=none] (1, 2) circle (0.05cm) coordinate (a2) node[left] {$h_2$};
    \draw[draw=none] (1, 5) circle (0.05cm) coordinate (a3) node[left] {$h_3$};
    \draw[draw=none] (1, 8) circle (0.05cm) coordinate (a4) node[left] {$h_4$};
    \draw[draw=none] (1, 9) circle (0.05cm) coordinate (a5) node[left] {$h_5$};
    \draw[-] (1, 1) -- (16, 8);
    \draw[-] (1, 2) -- (16, 3);
    \draw[-] (1, 5) -- (16, 1);
    \draw[-] (1, 8) -- (16, 4);
    \draw[-] (1, 9) -- (16, 2);
    \filldraw (3.5, 2.16) circle (0.1cm) coordinate (c1) node[above] {$v$};
    \filldraw (10, 2.6) circle (0.1cm) coordinate (c2) node[left] {};
    \filldraw (6.45, 3.54) circle (0.1cm) coordinate (c3) node[above] {};
    \filldraw (10.54, 5.46) circle (0.1cm) coordinate (c4) node[above] {};
    \filldraw (6, 6.66) circle (0.1cm) coordinate (c5) node[above] {};
    \filldraw (9.57, 5) circle (0.1cm) coordinate (c6) node[above] {};
    \filldraw (14.12, 2.87) circle (0.1cm) coordinate (c7) node[above] {};
    \draw[dashed] (4.5, 0.5) .. controls (6, 4) and (2, 7) .. (2.5, 9.5);
    \draw[line width=1] (c1) -- (c2);
    \draw[line width=1] (c1) -- (c3);
    \draw[line width=1] (a3) -- (c3);
    \draw[line width=1] (a4) -- (c5);
    \draw[line width=1] (a5) -- (c5);
    \draw[draw=none] (3.5, 8) circle (0.05cm) coordinate (a1) node[above] {$c_1$};
    \draw[draw=none] (2.3, 7.5) circle (0.05cm) coordinate (a1) node[below] {$c_2$};
    \draw[draw=none] (5, 4) circle (0.05cm) coordinate (a1) node[above] {$c_3$};
    \draw[draw=none] (5.7, 2) circle (0.05cm) coordinate (a1) node[above] {$c_6$};
    \draw[draw=none] (7, 2.3) circle (0.05cm) coordinate (a1) node[below] {$c_7$};
\end{tikzpicture}
\label{fig:top_sweep}
}
\caption{}
\end{figure}

We focus on the elementary steps, because at each step we can
compute the CSD of the crossed vertex.  As it moves, the topological 
line retains the property that everything
to the left of it has already been swept over. That is, if we are crossing
vertex $v$ that belongs to segment $\s$, every vertex of the line containing
$\s$ on the opposite side of the topological line prior to crossing has
already been swept.
For each segment $\s \in \S$ we
store the last processed vertex and denote it by $ver(\s)$, along
with its CSD. Since every vertex lies at the
intersection of two segments, we also store the crossing segment for
$\s$ and $ver(\s)$, denote it by $cross(ver(\s))$. Before starting the
topological sweep, for each $\s \in \S$ we assign $ver(\s) = \emptyset$,
and $cross(ver(\s)) = \emptyset$. After completing an elementary step
where we crossed a vertex $v$ that lies at the intersection of $\h_i$
and $\h_j$, we assign $ver(\h_i) \leftarrow v$, $ver(\h_j)
\leftarrow v$, $cross(ver(\h_i)) = \h_j$, $cross(ver(\h_j)) = \h_i$.

The topological sweep skips through phantom vertices (see 
Lemma~\ref{L:phantom}), and computes
the CSD of vertices in $P$ directly.
We now explain how we process a non-phantom
vertex $v$ at an elementary step when we have an adjacent vertex
already computed. Assume $v$ is
at the intersection of $\h_i = \overrightarrow{AB}$
and $\h_k = \overrightarrow{EF}$, see Figure \ref{fig:segments_coordinates}.
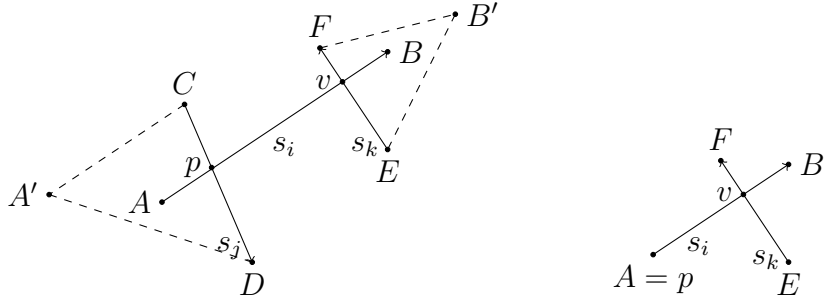
\begin{figure}[ht]
\centering
\subfigure[Two adjacent vertices $p$ and $v$ and their corresponding line
segments. A colourful triangle $\bigtriangleup CDA'$ contains $p$ but
not $v$, where $col(A') \notin \lbrace col(C), col(D)\rbrace$.
Similarly, a colourful triangle $\bigtriangleup EFB'$ contains $v$
but not $p$, where $col(B') \notin \lbrace col(E), col(F)\rbrace$.]{
	\begin{tikzpicture}[scale=.6]
    \filldraw (0.1, 0.1) circle (0.05cm) coordinate (p) node[left] {$p$};
    \filldraw (3, 2) circle (0.05cm) coordinate (v) node[left] {$v$};
    \filldraw (-1, -2/3) circle (0.05cm) coordinate (A) node[left] {$A$};
    \filldraw (-3.5, -0.5) circle (0.05cm) coordinate (A') node[left] {$A'$};
    \filldraw (4, 8/3) circle (0.05cm) coordinate (B) node[right] {$B$};
    \filldraw (5.5, 3.5) circle (0.05cm) coordinate (B') node[right] {$B'$};
    \filldraw (-.5, 1.5) circle (0.05cm) coordinate (C) node[above] {$C$};
    \filldraw (1, -2) circle (0.05cm) coordinate (D) node[below] {$D$};
    \filldraw (4, .5) circle (0.05cm) coordinate (E) node[below] {$E$};
    \filldraw (2.5, 2.75) circle (0.05cm) coordinate (F) node[above] {$F$};
    \draw[-{>}] (A) -- (B); 
    \draw[-{>}] (C) -- (D);
    \draw[-{>}] (E) -- (F);
    \draw[dashed] (C) -- (A');
    \draw[dashed] (A') -- (D);
    \draw[dashed] (F) -- (B');
    \draw[dashed] (B') -- (E);
    \draw[draw=none] (1.7, 1) circle (0.05cm) node[below] {$\h_i$};
    \draw[draw=none] (.5, -1.2) circle (0.05cm) node[below] {$\h_j$};
    \draw[draw=none] (3.5, 1) circle (0.05cm) node[below] {$\h_k$};
\end{tikzpicture}
\label{fig:segments_coordinates}
}
\qquad
\subfigure[Here $ver(\h_i) = \emptyset$, hence $cross(ver(\h_i)) =
\emptyset$, and we can not run Subroutine \ref{alg: subroutine}.]{
\begin{tikzpicture}[scale=.6]
    \filldraw (3, 2) circle (0.05cm) coordinate (v) node[left] {$v$};
    \filldraw (1, 2/3) circle (0.05cm) coordinate (A) node[below] {$A = p$};
    \filldraw (4, 8/3) circle (0.05cm) coordinate (B) node[right] {$B$};
    \filldraw (4, .5) circle (0.05cm) coordinate (E) node[below] {$E$};
    \filldraw (2.5, 2.75) circle (0.05cm) coordinate (F) node[above] {$F$};
    \draw[-{>}] (A) -- (B);
    \draw[-{>}] (E) -- (F);
    \draw[draw=none] (2, 1.3) circle (0.05cm) node[below] {$\h_i$};
    \draw[draw=none] (3.5, 1) circle (0.05cm) node[below] {$\h_k$};
\end{tikzpicture}
\label{fig:first_vertex}
}
\caption{Capturing a new vertex}
\end{figure}
Without loss of generality we
take $ver(\h_i) = p$, where $cross(ver(\h_i)) = \h_j$. We view this
elementary step as moving along the segment $\h_i$ from its
intersection point with $\h_j$ to the one with $\h_k$.
Each intersecting segment forms a triangle with every point strictly
to one side. Thus when we leave segment $\h_j = (C, D)$ behind,
we exit as many colourful triangles that contain $p$ as there are
points on the other side of $\h_j$ of colours different from
$col(C)$ and $col(D)$. When we encounter segment $\h_k = (E, F)$,
we enter the colourful triangles that contain $v$ formed by $\h_k$
and each point of a colour different from $col(E)$ and $col(F)$
on the other side of $\h_k$. Let us denote the $x$ and $y$
coordinates of a point $A$ by $A.x$ and $A.y$ respectively. Now, to
compute the CSD of $v$ knowing the CSD of $p$, we execute Subroutine
\ref{alg: subroutine}. 


\makeatletter\renewcommand{\ALG@name}{\tt{Subroutine}}

\begin{algorithm}
\caption{\tt{Computing $\hat{D}(v)$ from $\hat{D}(p)$}}
\Require{$\tt{\hat{D}(p), p, v, \h_j = (C, D), \h_k = (E, F).}$}
\Ensure{$\tt{\hat{D}(v).}$}
\begin{algorithmic}[1]
	\If{$\tt{(v.x - C.x)(D.y - C.y) - (v.y - C.y)(D.x - C.x) < 0}$}
      \State $\tt{\hat{D}(v) \gets \hat{D}(p) - r(\h_j);}$
    \Else 
      \State $\tt{\hat{D}(v) \gets \hat{D}(p) - l(\h_j);}$
    \EndIf
    \If{$\tt{(p.x - E.x)(F.y - E.y) - (p.y - E.y)(F.x - E.x) < 0}$}
      \State $\tt{\hat{D}(v) \gets \hat{D}(v) + r(\h_k);}$
    \Else 
      \State $\tt{\hat{D}(v) \gets \hat{D}(v) + l(\h_k);}$
    \EndIf
\end{algorithmic}
\label{alg: subroutine}
\end{algorithm}

When both $ver(\h_i) = \emptyset$, $ver(\h_k) = \emptyset$, i.e.
vertex $v$ is the first vertex to be discovered for both segments (Fig.
\ref{fig:first_vertex}), we 
execute $CSD(v, P)$ to find the depth, and otherwise update in the
usual way.  Since once a segment $\h$ has $ver(\h)$ nonempty it cannot
return to being empty, we call CSD at most $O(n^2)$ times.

\begin{lemma} \label{L:phantom}
At phantom vertices, we do not need to compute the CSD or update 
values of $ver(\h)$.
\end{lemma}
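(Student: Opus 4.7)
The plan is to reduce both assertions of the lemma to a single geometric observation about where the colourful simplicial depth can change. Fix a colourful segment $\s = (A,B) \in \S$ and let $\h$ be the line spanning it. Any colourful triangle $T$ having $\s$ as a side lies entirely in one closed half-plane with respect to $\h$, with $T \cap \h = \s$; hence crossing $\h$ at a point $w \notin \s$ does not change membership in $T$. Under the general position hypothesis, no other colourful triangle has a side contained in $\h$, so $\hatd$ is unchanged when the sweeping point crosses $\h$ anywhere outside of $\s$.

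Now let $v = \h_j \cap \h_k \in V^* \setminus V$ be a phantom vertex (the case $v \in P$ is handled separately by the direct computation cited just before the lemma). By definition of $V$ at least one of $v \notin \s_j$ or $v \notin \s_k$ holds; without loss of generality assume $v \notin \s_j$. The elementary step that crosses $v$ moves the sweep across $\h_j$ at the point $v$, and by the observation above no colourful triangle changes containment, so $\hatd$ is constant in a neighbourhood of $v$ along the sweep path. This proves the first assertion: no new depth value needs to be computed at $v$.

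For the second assertion, recall that Subroutine~\ref{alg: subroutine} computes the update at a real vertex $v'$ on $\h_i$ by applying ``leave $cross(ver(\h_i))$, enter $\h_{k'}$'' using the tallies $r(\cdot), l(\cdot)$ of the associated segments. Iterating the geometric observation, $\hatd$ along $\h_i$ is constant on the open stretch between any two consecutive \emph{real} vertices. Thus leaving $(ver(\h_i), cross(ver(\h_i)))$ pinned at the last real vertex $p$ correctly supports the next real update at $v'$, because the only true depth transitions occur at $p$ and $v'$ themselves. In contrast, updating $ver(\h_i) \gets v$ would cause Subroutine~\ref{alg: subroutine} to apply the tallies of $\h_k$ at $v'$, but those tallies count points relative to the segment $\s_k$ rather than its extension; since $v$ does not actually sit on $\s_k$ there was no real crossing to cancel, and the depth at $v'$ would be corrupted.

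The main obstacle is articulating the invariant on $ver(\h)$ that survives skipping phantom vertices, namely that $ver(\h_i)$ always records the last real vertex on $\h_i$ swept and $cross(ver(\h_i))$ the segment truly crossed there; once this is stated, the correctness of skipping is immediate from the local constancy of $\hatd$ established above.
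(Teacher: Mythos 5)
Your overall route is the same as the paper's: the key observation that a colourful triangle with side $\s$ meets the line through $\s$ exactly in $\s$, so that crossing that line outside the segment changes nothing, together with the invariant that $ver(\h)$ and $cross(ver(\h))$ stay pinned at the last true vertex so that the next application of Subroutine~\ref{alg: subroutine} is still correct. Your treatment of the second assertion (why the stored pair must not be advanced to a phantom vertex, and why the next real update remains valid) is sound and in fact spells out what the paper only sketches.

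The justification of the first assertion, however, has a genuine gap. At a phantom vertex $v = \h_j \cap \h_k$ with, say, $v \notin \s_j$, the elementary step crosses \emph{both} lines at $v$, and your claim that ``no colourful triangle changes containment, so $\hatd$ is constant in a neighbourhood of $v$ along the sweep path'' fails in exactly the one nontrivial case, namely when $v$ lies in the interior of the other segment $\s_k$: the triangles having $\s_k$ as a side are entered or left as one passes across $\s_k$ at $v$, and because triangles are closed, $\hatd(v)$ is in general strictly larger than the depth at nearby points off $\s_k$ on the sweep path. The constancy you need holds only along $\s_k$ itself (this is precisely the paper's ``moving along $\h$ inside a segment'' case), not transversally. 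Moreover, even with that corrected, local constancy alone does not give ``no new depth value needs to be computed at $v$'': one must still argue that the value at $v$ is matched or dominated at a point the algorithm actually evaluates. The paper supplies this by invoking Lemma~\ref{L:intersection_points} at the outset --- a phantom vertex lies in the interior of a cell or of a colourful segment, so the maximum of $\hatd$ is attained at a true crossing (or data point) in any case --- whereas you never appeal to that lemma for this assertion. With these two repairs (take the constancy along the segment containing $v$, and add the appeal to Lemma~\ref{L:intersection_points}), your argument coincides with the paper's.
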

\begin{proof}
First, notice that the simplicial median itself won't occur at such
a point as it is either inside a cell or a colourful segment,
as noted in Lemma~\ref{L:intersection_points}.  Then, notice that
if we are moving along a segment of an extended line $\h$ outside
the segment (i.e.~extended from a segment, but not inside it) 
then we will next encounter either
another phantom vertex or a vertex from $P$.  In the first case, 
we repeat the current situation and no computations are needed,
and in the second, 
we do not require prior vertex information to find the depth
as we have already computed the vertex directly from a call to $CSD(v, P)$.

On the other hand, if we are moving along $\h$ inside a segment,
then the line we are crossing meets $\h$ outside of its segment.
This means that as we cross we are not entering or leaving any
colourful triangles, so colourful depth remains constant and will
not attain its maximum at the crossing.  We do not need to update
$ver(\h)$ since the previous point from $V$ remains the relevant
one from the perspective 
\end{proof}

Indeed, it is possible to run the algorithm without generating or
considering phantom vertices.  We include them here as they
provide a canonical starting point, the leftmost cut, and a clean
definition of vertical cuts.  Also, we wanted to address why they 
don't affect the calculation.
In practice, including phantom vertices will increase the time and 
memory requirements by constant factor.

\subsection{Running Time and Space Analysis}
\label{SS:M:3:2}
Algorithm \ref{alg: main} is our main algorithm. First, it 
computes the half-space counts $r(\s)$
and $l(\s)$, which has a running time of $O(n^2)$.
At the same time, we initialize the structure $\S$ that contains
the colourful segments, setting $ver(\s) = \emptyset$ and 
$cross(ver(\s))=\emptyset$  for all $\s \in \S$.
Note that these as well as $H$, $List(P_i)$,
$r(\s)$, $l(\s)$ require $O(n^2)$ storage. 

\algdef{SE}[SUBALG]{Indent}{EndIndent}{}{\algorithmicend\ }%
\algtext*{Indent}
\algtext*{EndIndent}
\makeatletter\renewcommand{\ALG@name}{Algorithm}

\begin{algorithm}
\caption{\tt{Computing $\tt{\hat{\mu}(P)}$}}
\Require{$\tt{P^1, \ldots, P^k, \S, H, r(\s), l(\s).}$}
\Ensure{$\tt{v, \hat{\mu}(P).}$}
\begin{algorithmic}[1]
	\State \tt{Run Algorithm \ref{alg:counts};}
	\Comment \tt{Compute $\tt{r(\s), l(\s)}$;}
    \State \tt{Sort $\tt{H}$ while permuting $\tt{\S}$;}
    \State $\tt{max \gets 0}$;
    \For{$\tt{i \gets 0, n - 1}$}
    	\State $\tt{\theta = }$ polar angles of $\tt{List (P_i);}$
    	\State $\tt{\hat{D}(P_i) \gets CSD(P_i, \theta);}$
        \If{d > max}
        	\State $\tt{max \gets \hat{D}(P_i);}$
            \State $\tt{median \gets P_i;}$
        \EndIf
    \EndFor
    \State \tt{I $\gets \emptyset$};
    \State \tt{Push common right endpoints of the edges of the leftmost cut onto I};
    \While{$I \neq \emptyset$}
    \Comment{\tt{Start of the topological sweep.}}
    \State $\tt{v \gets pop(I)};$
    \Comment{$\tt{v}$ lies at the intersection of $\tt{\s_i = (A, B)}$ and $\tt{\s_k = (E, F)}$}
    \If{\tt{$v$ lies in the interiors of $\s_i$ and $\s_k$}}
    \If{$\tt{ver(\s_i) = \emptyset \And ver(\s_k) = \emptyset}$}
    	\State $\tt{\hat{D}(v) = CSD(v, P);}$
    \ElsIf{$ver(\s_i) \neq \emptyset$}
    	\State $\tt{\hat{D}(v) \gets}$ \tt{Subroutine \ref{alg: subroutine}} $\tt{(\hat{D}(p), p, v, \s_j, \s_k);}$
    	\Comment{\parbox[t]{.35\linewidth}{$\tt{p = ver(\s_i), \s_j = cross(ver(\s_i))}$}}
    \Else
    	\State $\tt{\hat{D}(v) \gets}$ \tt{Subroutine \ref{alg: subroutine}} $\tt{(\hat{D}(p), p, v, \s_j, \s_i);}$
    	\Comment{\parbox[t]{.35\linewidth}{$\tt{p = ver(\s_k), \s_j = cross(ver(\s_k))}$}}
    \EndIf
    \If{$\tt{\hat{D}(v) > max}$}
          \State $\tt{max \gets \hat{D}(v);}$
          \State $\tt{median \gets v;}$
    \EndIf
    \State $\tt{ver(\s_i) \gets v}$, $\tt{ver(\s_k) \gets v}$,
    $\tt{cross(ver(\s_i)) \gets \s_k}$, $\tt{cross(ver(\s_k)) \gets \s_i;}$
    \EndIf
    \State \tt{Push any new common right endpoints of the edges onto I};
    \EndWhile
    \Comment{\tt{End of the topological sweep.}}
    \State \Return $\tt{(median, max).}$
\end{algorithmic}
\label{alg: main}
\end{algorithm}

Sorting the lines in $H$ according to their slopes while also
permuting the segments in $\S$ takes $O(n^2\log{n})$ time. We assume
non-degeneracy and no vertical lines (these can use some special
handling, see e.g. \cite{MR1213465}). Computing the CSD of points
where no previous vertex is available takes $O(n^2 \log{n} + kn^2)$ total time.
The topological sweep takes linear time in the number of intersection points 
of H, so $O(n^4)$. 
We do not store all the vertices,
but only one per segment. Steps 15--29 (except for 18) in Algorithm
\ref{alg: main} take O(1) time, including the calls to the Subroutine
\ref{alg: subroutine}. As for step 18, it is executed at most $n^2$
times; following its execution, $ver(\s)$ will be initiated for the
two relevant segments.
Therefore, the total time it will take is $O(n^3\log{n} + kn^2)$.
Hence overall our algorithm takes $O(n^4)$ time and needs $O(n^2)$
storage.

Algorithm \ref{alg: main} returns a point that has maximum colourful
simplicial depth along with its CSD. It is simple to modify the algorithm
to return a list of all such points if there is more than one. 
We believe that maintaining such a list will not increase the required
storage, i.e.~it will contain $O(n^2)$ points throughout the execution
of the algorithm, but we haven't proved this.

\section{Conclusions and Questions}
\label{S:Concl}
Our main result is an algorithm computing the colourful simplicial depth
of a point $\x$
relative to a configuration $P = \left(P^1, P^2, \ldots, P^k\right)$
of $n$ points in $\R^2$ in $k$ colour classes can be solved in
$O(n \log{n} + kn)$ time, or in $O(kn)$ time if the input is sorted.
If we assume, as seems likely, that we cannot do better without
sorting the input, then for fixed $k$ this result is optimal up
to a constant factor.  It is an interesting question whether we
can improve the dependence on $k$, in particular when $k$ is large.

Computing colourful simplicial depth in higher dimension is very
challenging, in particular because there is no longer a natural
(circular) order of the points.  
Non-trivial algorithms for monochrome depth do exist
in dimension 3 \cite{CO01}, \cite{MR1189827}, but we do not know 
of any non-trivial algorithms for $d \ge 4$.
Algorithms for monochrome and colourful depth in higher dimension 
are an appealing challenge.
Indeed, for $(d+1)$ colours in $\R^d$, it is not even clear how 
efficiently one can exhibit a single colourful simplex containing
a given point~\cite{BO97}, \cite{DHST08}.

\section*{Acknowledgments}
This research was partially supported by an NSERC Discovery Grant to
T.~Stephen and by an SFU Graduate Fellowships to O.~Zasenko.
We thank A.~Deza for comments on the presentation.
\bibliographystyle{amsalpha} 
\bibliography{references.bib}

\end{document}